\newtheorem{lemma}{Lemma}
\newtheorem{theorem}{Theorem}
\newtheorem*{example}{Example}
\DeclarePairedDelimiterX{\infdivx}[2]{(}{)}{#1\;\delimsize\|\;#2}
\newcommand{\Tr}{\mathrm{Tr}}
\newcommand{\s}{s_\mathrm{TD}}
\newcommand{\CE}{canonical ensemble}
\newcommand{\indexCE}{\mathrm{c}}
\newcommand{\ind}[1]{
  \ensuremath{{#1}}}
\newcommand{\indind}[2]{
  \ensuremath{{#1,#2}}}
\begin{document}

\title{Stationarity of quantum statistical ensembles at first-order phase transition points}
\author{Yasushi Yoneta}
\email{yasushi.yoneta@riken.jp}
\affiliation{Department of Basic Science,
The University of Tokyo, 3-8-1 Komaba, Meguro, Tokyo 153-8902, Japan}
\affiliation{RIKEN Center for Quantum Computing,
2-1 Hirosawa, Wako City, Saitama 351-0198, Japan}
\date{\today}
\begin{abstract}
We study the dynamics of quantum statistical ensembles at first-order phase transition points of finite macroscopic systems. First, we show that at the first-order phase transition point of systems with an order parameter that does not commute with the Hamiltonian, any quantum state with a non-zero value of the order parameter always evolves towards a macroscopically distinct state after a sufficiently long time. From this result, we argue that stationarity required for statistical ensembles should be interpreted as stationarity on a sufficiently long but finite time scale. Finally, we prove that the density matrix of the squeezed ensemble, a class of generalized statistical ensembles proposed as the only concrete method of constructing phase coexistence states applicable to general quantum systems, is locally stationary on time scales diverging in the thermodynamic limit. Our results support the validity of the squeezed ensemble from a dynamical point of view and open the door to non-equilibrium statistical physics at the first-order phase transition point.
\end{abstract}
\maketitle

\section{Introduction} \label{sec:introduction}

Understanding the dynamics of quantum many-body systems is one of the most intriguing research topics in condensed matter physics and quantum statistical mechanics. There, statistical ensembles are used as the initial states of the dynamics \cite{Kubo1991}. Therefore, the statistical ensemble must correctly provide not only static properties but also dynamic properties.

Stationarity is one of the most fundamental dynamic properties of equilibrium states \cite{Callen1985}. Accordingly, the density matrix given by the statistical ensemble should also be stationary. However, stationarity is not apparent when the equilibrium state is specified by a noncommutative set of additive observables since these observables are, in general, not conserved quantities. Nevertheless, away from the first-order phase transition point, one can employ the canonical ensemble,
which is strictly invariant under the time evolution.

However, the {\CE} cannot generally be applied to the equilibrium state at the first-order phase transition point \cite{Gross2001,Yoneta2019}. First-order phase transitions are characterized by a discontinuous change in the equilibrium state as a function of intensive parameters, such as the temperature and the magnetic field. Thus, at the first-order phase transition point, there exist several equilibrium states for a single value of a set of the intensive parameters: a single-phase state, which is the equilibrium state immediately before or after the phase transition point, and phase coexistence states, in which several phases coexist spatially in various proportions. On the other hand, the {\CE} is specified one-to-one by a set of the intensive parameters. As a result, at the first-order phase transition point, the {\CE} can give only a certain (single-phase states in many cases) or a statistical mixture of the various equilibrium states with the same value of intensive parameters. Consequently, it has long been pointed out that in finite systems, the equilibrium state at the first-order phase transition point can fluctuate macroscopically in time \cite{Goldenfeld1992,Lebowitz1999}.

Equilibrium states at the first-order phase transition point can be uniquely specified by a proper additive quantity, called the `order parameter'. Hence, at the first-order phase transition point, the statistical ensembles specified by additive observables including the order parameter have been employed, such as the microcanonical and restricted ensemble. However, such ensembles are ill-defined or ill-behaved when some of the additive observables do not commute with each other \cite{Yoneta2021}. Thus, it is difficult to theoretically give the general form of the quantum state corresponding to the equilibrium state at the first-order phase transition point.

Recently, this fundamental problem has been solved by extending the generalized ensemble, called the squeezed ensemble (SE), in such a way that it is applicable to equilibrium states specified by noncommutative additive observables \cite{Yoneta2021}. It was proved that the SE correctly gives the density matrix for any equilibrium state at the first-order phase transition point and the thermodynamic functions. Therefore, using the SE, one can thoroughly analyze microscopic structures and thermodynamic properties at first-order phase transition points of general quantum systems. However, the dynamic properties of the SE still need to be clarified.

In this paper, we study the dynamics of the SE at the first-order phase transition point of finite macroscopic systems. We consider the exact unitary dynamics rather than the dissipative dynamics \cite{Kindermann1980,Wilming2017}. This paper is organized as follows. In Section~\ref{sec:setup}, we describe the setup and notation. In Section~\ref{sec:decay}, we prove that at the first-order phase transition point with an order parameter that does not commute with the other additive observables specifying the equilibrium state, any quantum state with a non-zero value of the order parameter always changes macroscopically after a sufficiently long time. We then argue that stationarity of the equilibrium state should be interpreted as stationarity on a sufficiently long but finite time scale, at least in the first-order phase transition point. Finally, in Section~\ref{sec:stationarity}, we prove that the density matrix of the SE is locally stationary on time scales diverging in the thermodynamic limit. From these results, we discuss that the SE is consistent with thermodynamics.
 \section{Setup} \label{sec:setup}

We consider a quantum spin system with short-range interactions on the $\nu$-dimensional hypercubic lattice $\Lambda=[-n,+n]^\nu$ with $N=(2n+1)^\nu$ sites. We assume that the equilibrium state for each $N$ can be specified uniquely by a set of $m$ additive quantities $(X_{\ind{0}}=U, X_{\ind{1}},\cdots,X_{\ind{m-1}})$, where $U$ is the internal energy. Let $(\hat{X}_{\indind{N}{0}}=\hat{H}_N,\hat{X}_{\indind{N}{1}},\cdots,\hat{X}_{\indind{N}{m-1}})$ be a set of the corresponding additive observables, where $\hat{H}_N$ is the Hamiltonian. We assume that each additive observable $\hat{X}_{\indind{N}{i}}$ is expressed as
\begin{align}
  \hat{X}_{\indind{N}{i}} = \sum_{\substack{j\in\mathbb{Z}^\nu \text{\ s.t.}\\\gamma_j(I_{\ind{i}}) \subset \Lambda}} \gamma_j(\hat{o}_{\ind{i}}),
\end{align}
where $\gamma_j$ is the $j$-lattice translation for $j\in\mathbb{Z}$ and $\hat{o}_{\ind{i}}$ is an $N$-independent local observable with support $I_{\ind{i}}$.
For simplicity, we assume open boundary conditions, but the arguments in this paper easily extend to the case of general boundary conditions.

Our final purpose is to prove stationarity in the thermodynamic limit: $N\to\infty$ while $X_{\ind{i}}/N$ is fixed. Therefore, we introduce additive quantities per site
\begin{align}
  x_{\ind{i}} \equiv X_{\ind{i}}/N \qquad (i = 0, 1, \cdots, m-1),
\end{align}
and corresponding observable
\begin{align}
  \hat{x}_{\indind{N}{i}} \equiv \hat{X}_{\indind{N}{i}}/N \qquad (i = 0, 1, \cdots, m-1).
\end{align}
For simplicity of notation, sets of $m$ physical quantities are denoted by bold symbols like
\begin{align}
  \bm{x} &= (x_{\ind{0}},x_{\ind{1}},\cdots,x_{\ind{m-1}}).
\end{align}

Let $\s$ be the thermodynamic entropy per site. According to thermodynamics, $\s$ is a function of $\bm{x}$. The first derivative of $\s$ is called the `(entropic) intensive parameter' \cite{Callen1985}. Let $\Pi_{\ind{i}}$ denote the intensive parameter conjugate to $X_{\ind{i}}$, i.e.,
\begin{align}
  \Pi_{\ind{i}}(\bm{x}) \equiv \frac{\partial \s}{\partial x_{\ind{i}}}(\bm{x}).\label{eq:def_Pi}
\end{align}
Particularly, $\beta(\bm{x}) \equiv \Pi_{\ind{0}}(\bm{x})$ is the inverse temperature of the equilibrium state specified by $\bm{x}$.

Below we consider the time evolution of the density matrix corresponding to the equilibrium state. Let $\bm{\Pi}$ be a set of intensive parameters of the equilibrium state,
\begin{align}
  \bm{\Pi} = (\Pi_{\ind{0}},\Pi_{\ind{1}},\cdots,\Pi_{\ind{m-1}}).
\end{align}
According to thermodynamics, the equilibrium state is macroscopically stationary under external fields coupled to $X_{\ind{i}}$ of magnitude
\begin{align}
  f_{\ind{i}} = -\Pi_{\ind{i}}/\Pi_{\ind{0}}. \label{eq:mech_force}
\end{align}
On the other hand, according to quantum mechanics, such a dynamics is generated by the Hamiltonian including the interactions with external fields \footnote{$\hat{H}_N=\hat{X}_{\indind{N}{0}}$ is the Hamiltonian of the completely isolated system and does not include the interactions with external fields.}:
\begin{align}
  \hat{G}_N(\bm{\Pi})
  &\equiv \hat{H}_N - f_{\ind{1}} \hat{X}_{\indind{N}{1}} - \cdots - f_{\ind{m-1}} \hat{X}_{\indind{N}{m-1}} \nonumber\\
  &= \hat{H}_N + \frac{\Pi_{\ind{1}}}{\Pi_{\ind{0}}} \hat{X}_{\indind{N}{1}} + \cdots + \frac{\Pi_{\ind{m-1}}}{\Pi_{\ind{0}}} \hat{X}_{\indind{N}{m-1}}. \label{eq:G_N}
\end{align}
We rescale the time variable and the Hamiltonian as
\begin{align}
  \tau &\equiv t / \Pi_{\ind{0}},\\
  \hat{\Gamma}_N(\bm{\Pi}) &\equiv \Pi_{\ind{0}} \hat{G}_N = \sum_i \Pi_{\ind{i}} \hat{X}_{\indind{N}{i}}, \label{eq:Gamma_N}
\end{align}
and define the Heisenberg operator as
\begin{align}
  \hat{A}(\tau) &\equiv e^{+ i \hat{\Gamma}_N \tau} \hat{A} e^{- i \hat{\Gamma}_N \tau}.
\end{align}

Consider the case where the equilibrium state is {\em not} at the first-order phase transition point. In this case, the canonical Gibbs state is one of the quantum states that correspond to that equilibrium state:
\begin{align}
  \hat{\rho}_N^\indexCE \propto \exp \left[- \sum_i \Pi_{\ind{i}} \hat{X}_{\indind{N}{i}} \right].
\end{align}
Obviously, we can write $\hat{\rho}_N^\indexCE$ as
\begin{align}
  \hat{\rho}_N^\indexCE \propto \exp \left[- \hat{\Gamma}_N(\bm{\Pi}) \right]. \label{eq:Nc-Gamma}
\end{align}
Therefore, this density matrix is strictly invariant under the dynamics generated by $\hat{\Gamma}_N(\bm{\Pi})$.

However, in the case where the equilibrium state is at the first-order phase transition point, we cannot employ the {\CE}. For example, at the first-order phase transition point due to the spontaneous symmetry breaking, the {\CE} always gives a mixture of all ordered single-phase states with the same weight \cite{Schulman1980,Binder1981,Challa1986,Vollmayr1993} and cannot give a macroscopically definite state
\footnote{
It is sometimes possible to obtain the desired state at the first-order phase transition point using the {\CE} by imposing clever boundary conditions \cite{Dobrushin1973,vanBeijeren1975,Landau2014}. In this case, however, one would artificially add boundary terms to additive observables $\hat{\bm{X}}_N$ and construct the {\CE} from a set of these observables $\hat{\bm{X}}_N'$. On the other hand, since the generator of the time evolution is determined purely from quantum mechanics, independent of such artificial manipulations in statistical mechanics, a set of observables appearing in the definition of $\hat{\Gamma}_N$ are the same as the original set of additive observables $\hat{\bm{X}}_N (\neq \hat{\bm{X}}_N')$. As a result, Eq.~\eqref{eq:Nc-Gamma} no longer holds. Therefore, the canonical Gibbs state $\hat{\rho}_N^\indexCE$ is not strictly time invariant.
}.
In the next section, we study the time evolution of the quantum state corresponding to the equilibrium state at the first-order phase transition point.

\begin{example}[two-dimensional transverse-field Ising model]
Consider a two-dimensional system defined by the Hamiltonian
\begin{align}
  \hat{H}_N &= - J \sum_{\langle  i, j \rangle} \hat{\sigma}_i^z \hat{\sigma}_j^z - g \sum_{i} \hat{\sigma}_i^x.
  \label{eq:d2Ising_Hamiltonian}
\end{align}
Here, $\langle i,j \rangle$ denotes the nearest neighbors. We take the coupling $J$ to be ferromagnetic ($J>0$).
For $g$ smaller than the critical value $g_C \simeq 3.044$, this model exhibits spontaneous symmetry breaking
\cite{Elliott1971,Pfeuty1971,Nagai1987,Rieger1999,Blote2002,Nakamura2003,Jongh1998},
and at low temperature there exist several equilibrium states for a single $\beta$, which can be distinguished by the order parameter
\begin{align}
  \hat{X}_{\indind{N}{1}} &= \sum_{i} \hat{\sigma}_i^z,
  \label{eq:d2Ising_Magnetization}
\end{align}
i.e., the equilibrium state can be uniquely specified not by a value of the internal energy $U$ alone, but by a set of values of $U$ and $X_{\ind{1}}$.
In other word, this system exhibits a first-order phase transition where $X_{\ind{1}}$ changes discontinuously with respect to its conjugate intensive parameter $\Pi_{\ind{1}}$ (or longitudinal magnetic field $f_{\ind{1}} = -\Pi_{\ind{1}}/\Pi_{\ind{0}}$) at $\Pi_{\ind{1}}=0$ (or $f_{\ind{1}}=0$).
Note that the order parameter $\hat{X}_{\indind{N}{1}}$ do not commute with $\hat{H}_N$. Thus, it is not a conserved quantity. Therefore, at the first-order phase transition point, this system can fluctuate macroscopically in time between the equilibrium states with the same $\beta$ and different values of $\hat{X}_{\indind{N}{1}}$.
\end{example}
 \section{Macroscopic non-stationarity of ordered states} \label{sec:decay}

In this section, we prove under several conditions that at the first-order phase transition point with an order parameter that does not commute with the other additive observables, {\em any} quantum state with a non-zero value of the order parameter always changes {\em macroscopically} after a sufficiently long time. We then argue that, at the first-order phase transition point, it is impossible to require strict time invariance for statistical ensembles and that stationarity of the equilibrium state should be interpreted as stationarity on a sufficiently long but finite time scale.

To be concrete, we consider a system that exhibits spontaneous symmetry breaking and thus has an equilibrium state with a non-zero value of the order parameter. Suppose that the system is macroscopic but of finite size, as is the case for real thermodynamic systems. That is, we consider the case where the number of sites $N$ is fixed to a certain macroscopic but finite value. Choose a set of additive dynamical quantity per site $\bm{x}$ to be at the first-order phase transition point and consider the time evolution of the equilibrium state specified by that $\bm{x}$. Then, the argument $\bm{\Pi}$ of the generator of the time evolution $\hat{\Gamma}_N(\bm{\Pi})$ is taken to the set of intensive parameters at $\bm{x}$; $\bm{\Pi}=\bm{\Pi}(\bm{x})$.

Here, let us clarify the definitions of the symmetry and the order parameter of the system. When the unitary transformation $\hat{V}$ keeps $\hat{\Gamma}_N(\bm{\Pi})$ invariant, i.e.,
\begin{align}
  \hat{V}^\dagger \hat{\Gamma}_N(\bm{\Pi}) \hat{V} = \hat{\Gamma}_N(\bm{\Pi}),
\end{align}
then we call $\hat{V}$ the `symmetry transformation at $\bm{\Pi}$'. Suppose that the group $F$ formed by symmetric transformations at $\bm{\Pi}$ is a compact group.
Let $Z(F)$ be a center of $F$, i.e.,
\begin{align}
  Z(F) \equiv \left\{ \hat{U} \in F \middle| \hat{U}\hat{V}=\hat{V}\hat{U} \text{\ for all\ } \hat{V} \in F \right\}.
\end{align}
Suppose that the additive observable $\hat{X}_{\indind{N}{1}}$ is an order parameter
that measures the spontaneous breaking of the symmetry for an element of $Z(F)$, written as $\hat{R}$. That is, we consider the case where
\begin{enumerate}[label={(\roman*)}]
  \item \label{cond:order_parameter}
    There exists a symmetry transformation $\hat{R} \in Z(F)$ such that
    \begin{align}
      \hat{R}^\dagger \hat{X}_{\indind{N}{1}} \hat{R} = - \hat{X}_{\indind{N}{1}}.
      \label{eq:order_parameter}
    \end{align}
\end{enumerate}

\begin{example}[two-dimensional transverse-field Ising model]
To illustrate the settings described above more concretely, let us take an example of the transverse-field Ising model defined by Eqs.~\eqref{eq:d2Ising_Hamiltonian} and \eqref{eq:d2Ising_Magnetization}. As symmetries of this system at the first-order phase transition point $\bm{\Pi}=(\beta,0)$, the following are known:
\begin{itemize}
  \item $\mathbb{Z}_2$ symmetry with respect to the spin rotation by $\pi$ around the $x$-axis
    \begin{align}
      \hat{R} = \exp \left[ i \frac{\pi}{2} \sum_{i} \hat{\sigma}_i^x \right]
    \end{align}
  \item $C_{4v}$ symmetry of the square lattice
\end{itemize}
Let $F$ be the group formed by these symmetry transformations (i.e., $F \cong \mathbb{Z}_2 \times C_{4v}$), then $\hat{R} \in Z(F)$ and $\hat{R}^\dagger\hat{X}_{\indind{N}{1}}\hat{R}=-\hat{X}_{\indind{N}{1}}$. Therefore, it can be confirmed that condition~\ref{cond:order_parameter} is indeed fulfilled.
\end{example}

In addition, we assume that
there is no accidental degeneracy in the eigenstates of $\hat{\Gamma}_N$.
More precisely, we assume that
\begin{enumerate}[label={(\roman*)}]
  \setcounter{enumi}{1}
  \item For any pair of eigenstates $\ket{E_1}$ and $\ket{E_2}$
    which belong to the same eigenvalue of $\hat{\Gamma}_N$,
    there exists a symmetry transformation $\hat{V} \in F$
    such that $\braket{E_1|\hat{V}|E_2} \neq 0$.
    \label{cond:accidental_degeneracy}
\end{enumerate}

Under the above conditions~\ref{cond:order_parameter}-\ref{cond:accidental_degeneracy}, as proved in Appendix~\ref{sec:decay_proof}, we have the following theorem:
\begin{theorem} \label{theorem:decay}
For any initial state $\hat{\rho}$,
the long-time average of the expectation value of $\hat{X}_{\indind{N}{1}}$ under the time evolution generated by $\hat{\Gamma}_N(\bm{\Pi})$ is exactly zero:
\begin{align}
  \lim_{T\to\infty} \frac{1}{T} \int_0^T d\tau \ \Tr \left[ \hat{X}_{\indind{N}{1}}(\tau) \hat{\rho} \right] = 0.
\end{align}
\end{theorem}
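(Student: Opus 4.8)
The plan is to reduce the statement to a purely operator-theoretic identity via the diagonal (long-time-average) ensemble, and then to kill that operator using the symmetry. Since the system is finite, the Hilbert space is finite-dimensional and $\hat{\Gamma}_N(\bm{\Pi})$ has discrete spectrum; write $\hat{\Gamma}_N = \sum_E E\,\hat{P}_E$, where $\hat{P}_E$ is the orthogonal projector onto the eigenspace $\mathcal{H}_E$. Expanding in this basis gives $\hat{X}_{\indind{N}{1}}(\tau) = \sum_{E,E'} e^{i(E-E')\tau}\,\hat{P}_E\hat{X}_{\indind{N}{1}}\hat{P}_{E'}$, and the $\tau$-average annihilates every term with $E\neq E'$ (there are finitely many eigenvalues, so the convergence is immediate), whence
\begin{align}
  \lim_{T\to\infty}\frac{1}{T}\int_0^T d\tau\ \Tr\!\left[\hat{X}_{\indind{N}{1}}(\tau)\,\hat{\rho}\right] = \Tr\!\left[\Big(\sum_E \hat{P}_E\hat{X}_{\indind{N}{1}}\hat{P}_E\Big)\hat{\rho}\right].
\end{align}
Hence it suffices to prove the operator identity $\hat{P}_E\hat{X}_{\indind{N}{1}}\hat{P}_E = 0$ for every eigenvalue $E$ of $\hat{\Gamma}_N$; this is independent of $\hat{\rho}$.

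Next I would bring in the symmetry group. Every $\hat{V}\in F$ commutes with $\hat{\Gamma}_N(\bm{\Pi})$ by definition, hence commutes with each spectral projector $\hat{P}_E$ and preserves each eigenspace $\mathcal{H}_E$; thus $\mathcal{H}_E$ carries a unitary representation of $F$. I claim that condition~\ref{cond:accidental_degeneracy} forces this representation to be \emph{irreducible}. Indeed, if $\mathcal{H}_E$ contained a proper nonzero $F$-invariant subspace $\mathcal{U}$, then, the action being by unitaries, its orthogonal complement within $\mathcal{H}_E$ would also be $F$-invariant and nonzero; picking $0\neq\ket{E_1}\in\mathcal{U}$ and $0\neq\ket{E_2}$ in that complement would give $\braket{E_1|\hat{V}|E_2}=0$ for all $\hat{V}\in F$, contradicting~\ref{cond:accidental_degeneracy}.

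Finally I would invoke Schur's lemma together with condition~\ref{cond:order_parameter}. Since $\hat{R}\in Z(F)$, the restriction $\hat{R}|_{\mathcal{H}_E}$ commutes with the (irreducible) representation of $F$ on $\mathcal{H}_E$, so $\hat{R}\hat{P}_E = \lambda\,\hat{P}_E$ for some phase $\lambda$ with $|\lambda|=1$. Sandwiching~\eqref{eq:order_parameter} between $\hat{P}_E$'s and using $[\hat{R},\hat{P}_E]=0$, one gets on the one hand $\hat{R}^\dagger\big(\hat{P}_E\hat{X}_{\indind{N}{1}}\hat{P}_E\big)\hat{R} = -\,\hat{P}_E\hat{X}_{\indind{N}{1}}\hat{P}_E$, and on the other hand $\hat{R}^\dagger\big(\hat{P}_E\hat{X}_{\indind{N}{1}}\hat{P}_E\big)\hat{R} = |\lambda|^2\,\hat{P}_E\hat{X}_{\indind{N}{1}}\hat{P}_E = \hat{P}_E\hat{X}_{\indind{N}{1}}\hat{P}_E$. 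Comparing the two, $\hat{P}_E\hat{X}_{\indind{N}{1}}\hat{P}_E = 0$, and summing over $E$ completes the argument.

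The step I expect to require the most care is the link between the no-accidental-degeneracy hypothesis~\ref{cond:accidental_degeneracy} and irreducibility of the $F$-action on each eigenspace, together with a clean application of Schur's lemma when $F$ is only assumed compact (possibly continuous) rather than finite: one must check that the $F$-action on each finite-dimensional $\mathcal{H}_E$ is a genuine unitary representation so that the orthogonal-complement decomposition and Schur both apply — which holds here because all operators act on a finite-dimensional Hilbert space. It is worth emphasizing that~\eqref{eq:order_parameter} plus centrality of $\hat{R}$ does \emph{not} by itself yield the vanishing; the irreducibility of each degenerate sector is exactly what excludes the "two-level" counterexample ($\hat{X}_{\indind{N}{1}}\sim\hat{\sigma}^z$ and $\hat{R}\sim\hat{\sigma}^x$ acting within a single eigenspace), and this is precisely where condition~\ref{cond:accidental_degeneracy} is used.
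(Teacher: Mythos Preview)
Your proof is correct and follows the same overall arc as the paper's --- reduce to the diagonal ensemble $\sum_E \hat{P}_E\hat{X}_{\indind{N}{1}}\hat{P}_E$, then kill each block via the symmetry --- but your execution of the second step is noticeably leaner. The paper first decomposes $\mathcal{H}$ under the representation of the \emph{product} group $Z(F)\times F$, obtaining subspaces $\mathcal{K}_{\mu\nu}=\mathbb{C}^{n_{\mu\nu}}\otimes\mathcal{H}_\nu$ and writing $\hat{\Gamma}_N\cong\bigoplus_{\mu\nu}\hat{\gamma}_{\mu\nu}\otimes\hat{1}$ and $\hat{R}\cong\bigoplus_{\mu\nu}e^{i\theta_\mu(\hat{R})}\hat{1}$; it then diagonalizes each $\hat{\gamma}_{\mu\nu}$, uses condition~\ref{cond:accidental_degeneracy} to show that distinct labels $(\mu,\nu,i)$ give distinct eigenvalues, and finally computes the diagonal matrix elements of $\hat{X}_{\indind{N}{1}}$ in the resulting basis. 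You bypass all of this machinery: you observe directly that condition~\ref{cond:accidental_degeneracy} is exactly the statement that no eigenspace of $\hat{\Gamma}_N$ admits a nontrivial $F$-invariant orthogonal splitting, i.e., each $\mathcal{H}_E$ is already an irreducible $F$-module, so Schur forces $\hat{R}|_{\mathcal{H}_E}=\lambda_E\hat{1}$ and the sign flip~\eqref{eq:order_parameter} immediately gives $\hat{P}_E\hat{X}_{\indind{N}{1}}\hat{P}_E=0$. What the paper's longer route buys is an explicit simultaneous eigenbasis labeled by $(\mu,\nu,i,a)$ and the equivalence statement $\gamma_{\mu\nu}^i=\gamma_{\mu'\nu'}^{i'}\Leftrightarrow(\mu,\nu,i)=(\mu',\nu',i')$, which may be useful elsewhere; what your route buys is a shorter, basis-free argument that makes the role of condition~\ref{cond:accidental_degeneracy} (irreducibility of degenerate sectors) completely transparent.
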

From this theorem, it follows that at the first-order phase transition point, {\em any} quantum state corresponding to an equilibrium state in which the order parameter $X_{\ind{1}}$ is non-zero always changes to a macroscopically different state after an infinitely long time, whether it is a single-phase state or phase coexistence state.

Since this theorem holds for {\em any} initial state, this macroscopic non-stationarity cannot be avoided even by choosing an appropriate ensemble. On the other hand, it has been observed experimentally that the ordered state does exist stably even at the first-order phase transition point where the order parameter does not commute with the Hamiltonian \cite{Bacon1955,DeGennes1963}. However, the experiment is only conducted on a finite time scale, not on a truly infinite time scale. Therefore, to reconcile these two facts, it seems reasonable to interpret stationarity of the equilibrium state as stationarity on a sufficiently long but finite time scale, at least at the first-order phase transition point.

 \section{Stationarity of the density matrix of the squeezed ensemble} \label{sec:stationarity}

As discussed in Section~\ref{sec:introduction}, one cannot generally employ conventional ensembles at the first-order phase transition point when the order parameter does not commute with the other additive observables that specify the equilibrium state.
In Ref.~\cite{Yoneta2021}, the author has solved this fundamental problem by extending the generalized ensemble, called the squeezed ensemble (SE), in such a way that it is applicable to equilibrium states specified by noncommutative additive observables. We here investigate the time evolution of the density matrix of the SE and prove that it is stationary on the time scale that diverges in the thermodynamic limit.
In this section, we consider general first-order phase transitions, not only those due to spontaneous symmetry breaking.

We first review the definition of the SE and its properties that will be used in our analysis. Interested reader can find more information in Ref.~\cite{Yoneta2021}. Let $\eta(\bm{x})$ be a polynomial defined up to the order of the product, with real coefficients in $m$ noncommutative variables $\bm{x}$. Suppose that $\eta$ satisfies the following conditions:
\begin{enumerate}[label={(\Alph*)}]
  \item \label{cond:A}
    $\eta(\hat{\bm{x}}_N)$ is self-adjoint for all $N$.
  \item \label{cond:B}
    $\s(\bm{x})-\eta(\bm{x})$ has the unique maximum point
    $\bm{x}_\mathrm{max}^\eta$ in the thermodynamic state space
    and is strongly concave in a neighborhood of $\bm{x}_\mathrm{max}^\eta$.
\end{enumerate}
Using this $\eta$, the density matrix of the squeezed ensemble (SE) is defined as
\begin{align}
  \hat{\rho}_N^\eta
  &\equiv \frac{\displaystyle e^{-N\eta(\hat{\bm{x}}_N)}}{\displaystyle \Tr\left[e^{-N\eta(\hat{\bm{x}}_N)}\right]}. \label{eq:rho_N^eta}
\end{align}

As shown in Ref.~\cite{Yoneta2021}, the density matrix $\hat{\rho}_N^\eta$ correctly gives the equilibrium state specified by $\bm{x}=\bm{x}_\mathrm{max}^\eta$ even when $\hat{\bm{x}}_N$ do not commute with each other.
Thus, one can investigate microscopic structures of the equilibrium state at the first-order phase transition point of general quantum systems by choosing $\eta$ such that $\bm{x}_\mathrm{max}^\eta$ coincides with $\bm{x}$ in that equilibrium state.

One can also obtain thermodynamic functions easily from the SE. In particular, intensive parameters can be calculated using the following formula:
\begin{align}
  \Pi_{\ind{i}}^\eta
  &\equiv \lim_{N\to\infty} \frac{\partial\eta}{\partial x_{\ind{i}}}(\bm{x}_N^\eta)
  = \Pi_{\ind{i}}(\bm{x}^\eta),
  \label{eq:tdforce-formula}
\end{align}
where $x_{\indind{N}{i}}^\eta \equiv \Tr \left[ \hat{x}_{\indind{N}{i}} \hat{\rho}_N^\eta \right]$ and $\displaystyle x_{\ind{i}}^\eta \equiv \lim_{N\to\infty} x_{\indind{N}{i}}^\eta$. Since $\eta$ is a known function, one can obtain $\Pi_{\ind{i}}$ just by calculating $\bm{x}_N^\eta$.

Below in this section we investigate the time evolution of the density matrix of the SE. From the formula~\eqref{eq:tdforce-formula} for intensive parameters, in the equilibrium state described by the SE, the set of intensive parameters are
\begin{align}
  \bm{\Pi}^\eta = (\Pi_{\ind{0}}^\eta,\Pi_{\ind{1}}^\eta,\cdots,\Pi_{\ind{m-1}}^\eta).
\end{align}
Hence, according to the setup described in Section~\ref{sec:setup}, we consider the time evolution generated by $\hat{\Gamma}_N(\bm{\Pi}^\eta)$. Then, as proved in Appendix~\ref{sec:stationarity_proof}, the following holds which implies stationarity of the SE on time scales diverging in the thermodynamic limit:
\begin{theorem} \label{theorem:stationarity}
If there exists a positive constant $\alpha$ such that for every $i=0,1,\cdots,m-1$ the following conditions are fullfield:
\begin{align}
  \left( x_{\indind{N}{i}}^\eta - x_{\ind{i}}^\eta \right)^2
  &= O(N^{-2\alpha}), \label{cond:stationarity_finite-size-effect}\\
  \Tr \left[ \left( \hat{x}_{\indind{N}{i}} - x_{\indind{N}{i}}^\eta \right)^2 \hat{\rho}_N^\eta \right]
  &= O(N^{-2\alpha}). \label{cond:stationarity_variance}
\end{align}
Then for any $\epsilon>0$ and $U \subset \mathbb{Z}^\nu$, there exists $T=\Theta(N^{\min [\alpha/(\nu+1),1/\nu]})$ such that for any local observable $\hat{A}$ supported on $U$ and $|\tau| < T$ the relative time variation of the expectation value of $\hat{A}$ is less that $\epsilon$:
\begin{align}
  \left|
    \Tr \left[ \hat{A}(\tau) \hat{\rho}_N^\eta \right]
  - \Tr \left[ \hat{A}(0) \hat{\rho}_N^\eta \right]
  \right|
  < \epsilon \| \hat{A} \|.
  \label{eq:stationarity}
\end{align}
\end{theorem}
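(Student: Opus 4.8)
The plan is to prove approximate stationarity by bounding the instantaneous rate of change of local observables and integrating over $[0,\tau]$. Write $\hat{M}_N\equiv N\eta(\hat{\bm{x}}_N)$, so that $\hat{\rho}_N^\eta=e^{-\hat{M}_N}/\Tr[e^{-\hat{M}_N}]$. Differentiating the left-hand side of Eq.~\eqref{eq:stationarity} and using cyclicity gives
\begin{align}
  \Tr[\hat{A}(\tau)\hat{\rho}_N^\eta]-\Tr[\hat{A}(0)\hat{\rho}_N^\eta]
  = i\int_0^\tau d\tau'\,\Tr\big[\hat{A}(\tau')\,[\hat{\rho}_N^\eta,\hat{\Gamma}_N(\bm{\Pi}^\eta)]\big],
\end{align}
so the whole problem reduces to controlling $[\hat{\rho}_N^\eta,\hat{\Gamma}_N(\bm{\Pi}^\eta)]$ paired against Heisenberg-evolved local observables. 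The structural observation I would use is that $\hat{\rho}_N^\eta$ commutes with $\hat{M}_N$, whereas $\hat{M}_N$ and $\hat{\Gamma}_N(\bm{\Pi}^\eta)=\sum_i\Pi_i^\eta\hat{X}_{N,i}$ agree to first order near the point $\bm{x}_N^\eta$ at which $\hat{\bm{x}}_N$ concentrates. Taylor-expanding the symmetrized polynomial $\eta$ about $\bm{x}_N^\eta$,
\begin{align}
  \hat{M}_N-\hat{\Gamma}_N(\bm{\Pi}^\eta)
  = (\text{c-number})+\sum_i\delta_{N,i}\hat{Y}_i+N\hat{R}_N,
\end{align}
where $\hat{Y}_i\equiv\hat{X}_{N,i}-Nx_{N,i}^\eta$, the slope mismatch $\delta_{N,i}\equiv\tfrac{\partial\eta}{\partial x_i}(\bm{x}_N^\eta)-\Pi_i^\eta$ is $O(N^{-\alpha})$ by Eq.~\eqref{eq:tdforce-formula} together with condition~\eqref{cond:stationarity_finite-size-effect}, and $N\hat{R}_N$ is a finite sum of monomials $N^{1-k}\,\mathrm{Sym}[\hat{Y}_{i_1}\cdots\hat{Y}_{i_k}]$ with $2\le k\le\deg\eta$.

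The core of the argument is the claim that for every observable $\hat{C}$ supported on a finite region $V\subset\mathbb{Z}^\nu$ (with $|V|\ge1$),
\begin{align}
  \big|\Tr\big[\hat{C}\,[\hat{\rho}_N^\eta,\hat{\Gamma}_N(\bm{\Pi}^\eta)]\big]\big|
  = \big|\Tr\big[[\hat{\Gamma}_N(\bm{\Pi}^\eta),\hat{C}]\,\hat{\rho}_N^\eta\big]\big|
  \le K\,N^{-\alpha}\,|V|\,\|\hat{C}\|,
\end{align}
with $K$ depending only on $\eta$, the $\hat{o}_i$, and $\bm{\Pi}^\eta$. Because $[\hat{\rho}_N^\eta,\hat{M}_N]=0$, the $\hat{M}_N$-part of $\hat{\Gamma}_N(\bm{\Pi}^\eta)$ drops identically. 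The $\sum_i\delta_{N,i}\hat{Y}_i$-part contributes $\sum_i\delta_{N,i}\Tr[[\hat{X}_{N,i},\hat{C}]\hat{\rho}_N^\eta]$, which is $O(N^{-\alpha})|V|\,\|\hat{C}\|$ since $[\hat{X}_{N,i},\hat{C}]$ is supported within $O(1)$ of $V$ with norm $O(|V|\,\|\hat{C}\|)$.

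The genuinely nonlocal curvature term $N\hat{R}_N$ is where the difficulty lies, and I expect it to be the main obstacle: $\|\hat{Y}_i\|=O(N)$, so $N\hat{R}_N$ has operator norm $O(N)$ and the naive estimate is useless. The resolution I would pursue is that commuting each monomial with $\hat{C}$ produces terms $\hat{Y}_{i_1}\cdots[\hat{X}_{N,i_a},\hat{C}]\cdots\hat{Y}_{i_k}$ in which the factor $[\hat{X}_{N,i_a},\hat{C}]$ is again local with norm $O(|V|\,\|\hat{C}\|)$; one then cyclically rotates the trace so that $(\hat{\rho}_N^\eta)^{1/2}$ sits next to this local factor, applies Hölder for Schatten norms, bounds all but one of the remaining $\hat{Y}$'s by their operator norm $O(N)$, and bounds the last one through $\|\hat{Y}_i(\hat{\rho}_N^\eta)^{1/2}\|_2^2=\Tr[\hat{Y}_i^2\hat{\rho}_N^\eta]=O(N^{2-2\alpha})$, which is precisely condition~\eqref{cond:stationarity_variance}. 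With the $N^{1-k}$ prefactor this yields at least $O(N^{-\alpha})|V|\,\|\hat{C}\|$ per monomial, and summing the finitely many monomials preserves the bound. Notably no moments of $\hat{\bm{x}}_N$ beyond the variance enter, exactly because all but one $\hat{Y}$ is absorbed into an operator-norm estimate; this point is what makes conditions~\eqref{cond:stationarity_finite-size-effect}--\eqref{cond:stationarity_variance} sufficient.

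Finally I would combine this estimate with a Lieb--Robinson bound. Since $\hat{\Gamma}_N(\bm{\Pi}^\eta)$ is a sum of finite-range terms with $N$-independent norms, $\hat{A}(\tau')$ is approximated within $O\big(|U|\,\|\hat{A}\|\,e^{-\mu(\ell-v|\tau'|)}\big)$ by an observable $\hat{A}_\ell(\tau')$ supported on the $\ell$-ball $B_\ell(U)$, with $\|\hat{A}_\ell(\tau')\|\le2\|\hat{A}\|$ and $v,\mu$ independent of $N$. Choosing $\ell(\tau')=v|\tau'|+O(\log N)$ makes the truncation error negligible against the crude bound $\|[\hat{\rho}_N^\eta,\hat{\Gamma}_N(\bm{\Pi}^\eta)]\|_1=O(N^{1-\alpha})$, while the central estimate applied to $\hat{A}_\ell(\tau')$ on $V=B_\ell(U)$, with $|B_\ell(U)|=O(|U|\,\ell^\nu)$, bounds $\big|\Tr[\hat{A}(\tau')[\hat{\rho}_N^\eta,\hat{\Gamma}_N(\bm{\Pi}^\eta)]]\big|$ by $O\big(N^{-\alpha}|U|(|\tau'|^\nu+(\log N)^\nu)\|\hat{A}\|\big)$. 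Integrating over $\tau'\in[0,\tau]$ gives total deviation $O\big(N^{-\alpha}|U|(|\tau|^{\nu+1}+(\log N)^{\nu+1})\|\hat{A}\|\big)$, which is below $\epsilon\|\hat{A}\|$ once $|\tau|\lesssim(\epsilon N^\alpha/|U|)^{1/(\nu+1)}$; the independent requirement $B_\ell(U)\subset\Lambda$, needed for the Lieb--Robinson truncation to be meaningful, forces $|\tau|\lesssim N^{1/\nu}$. Taking $T=\Theta\big(N^{\min[\alpha/(\nu+1),\,1/\nu]}\big)$ respects both constraints and establishes Eq.~\eqref{eq:stationarity}.
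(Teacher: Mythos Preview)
Your proposal is correct and follows essentially the same route as the paper: expand $\eta(\hat{\bm{x}}_N)$ about the concentration point so that the linear part reproduces $\hat{\Gamma}_N(\bm{\Pi}^\eta)$ (the paper expands about $\bm{x}^\eta$ rather than $\bm{x}_N^\eta$, which eliminates your slope-mismatch term $\delta_{N,i}$ but is otherwise equivalent), control the quadratic-and-higher remainder against local commutators via the Cauchy--Schwarz/variance argument you describe, localize $\hat{A}(\tau)$ with a Lieb--Robinson bound on a ball of radius $v|\tau|+O(\log N)$, and integrate. The only organizational difference is that the paper performs the Lieb--Robinson truncation first and then expands, whereas you first isolate the core local estimate and then apply the truncation.
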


First, we would like to comment on the conditions of this theorem. Equation~\eqref{cond:stationarity_finite-size-effect} and \eqref{cond:stationarity_variance} is saying that the finite-size effects and variance of $\hat{x}_{\indind{N}{i}}$ in the SE decay algebraically in the thermodynamic limit, respectively. These conditions are confirmed numerically for a specific model with noncommutative $\hat{\bm{x}}_N$ (see Section~IX of Ref.~\cite{Yoneta2021}). Note that all conditions can be checked using only the statistical-mechanical quantities of the SE. Therefore, even for phase coexistence states of general quantum systems for which there is no concrete construction method other than SE, we can easily check stationarity of the density matrix of the SE.

Next, we discuss the consequences of this theorem. Consider performing a macroscopic experiment where the initial state is the density matrix $\hat{\rho}_N^\eta$ of the SE. Let $\epsilon$ be the maximum value of relative uncertainties of the measurement devices for local observables used in the experiment. Then it follows from Theorem~\ref{theorem:stationarity} that the time variations of the expectation values of the local observables are within the uncertainties of the measurements, on time scales diverging in the thermodynamic limit. Therefore, as long as one is interested in the local observables, one cannot observe any time variation. Since in thermodynamics we are only interested in local observables and additive observables, which can be expressed as a sum of local observables, stationarity in the above seance is sufficient for stationarity of the equilibrium state. Thus, we conclude that the SE is consistent with thermodynamics.
 \section{Conclusion}
In this paper, we have studied the time evolution of the quantum statistical ensemble at the first-order phase transition point. Stationarity is one of the most fundamental properties of equilibrium states. Therefore, the density matrix given by the statistical ensemble should also satisfy stationarity.
However, as we have shown, at the first-order phase transition point of systems with an order parameter that does not commute with the other additive observables specifying the equilibrium state, any quantum state with a non-zero value of order parameter always evolves towards a macroscopically distinct state after a sufficiently long time. From this result, we have argued that, at least at the first-order phase transition point, stationarity required for statistical ensembles should be interpreted as stationarity on a sufficiently long but finite time scale.
Then We have proved that the density matrix of the squeezed ensemble, a class of generalized ensembles proposed as the only concrete method of constructing phase coexistence states applicable to general quantum systems, satisfies stationarity in the above sense.
These results support the validity of the SE from a dynamical point of view and open the door to non-equilibrium statistical physics at the first-order phase transition point.

\begin{acknowledgments}
We thank Y. Chiba and A. Shimizu for discussions.
This work was supported by The Japan Society for the Promotion of Science, KAKENHI No.~20J22982.
\end{acknowledgments}

\onecolumngrid
\appendix
\section{Proof of Theorem~\ref{theorem:decay}} \label{sec:decay_proof}
\begin{proof}[Proof of Theorem~\ref{theorem:decay}]
Since $Z(F)$ is an Abelian group, its irreducible representations are all one-dimensional and can be expressed with real function $\theta_\mu$ as
\begin{align}
  Z(F) \ni \hat{U} \longmapsto e^{i\theta_\mu(\hat{U})} \in M(\mathbb{C}^1),
\end{align}
where $\mu$ labels the irreducible representation \cite{Serre1977}. Moreover, let $(\mathcal{H}_\nu,\sigma_\nu)$ be the irreducible representation of $F$. Here $\nu$ labels the irreducible representation.

Let $\mathcal{H}$ denote the Hilbert space associated to this quantum system. The following map $f$ is a representation on $\mathcal{H}$ of the direct product group $Z(F) \times F$:
\begin{align}
  f: Z(F) \times F \ni (\hat{U},\hat{V}) \longmapsto \hat{U} \hat{V} \in F.
\end{align}
In fact, for any $\hat{U}_1,\hat{U}_2 \in Z(F)$ and $\hat{V}_1,\hat{V}_2 \in F$, since $\hat{U}_2$ is in the center of $F$, it holds
\begin{align}
  f(\hat{U}_1,\hat{V}_1)f(\hat{U}_2,\hat{V}_2)
  = \hat{U}_1 \hat{V}_1 \hat{U}_2 \hat{V}_2
  = \hat{U}_1 \hat{U}_2 \hat{V}_1 \hat{V}_2
  = f(\hat{U}_1\hat{U}_2,\hat{V}_1\hat{V}_2).
\end{align}
That is, $f$ is indeed a group homomorphism.

First, we consider the irreducible decomposition of $(\mathcal{H},f)$. Any irreducible representation of the direct product group $Z(F) \times F$ can be written as a tensor product of irreducilble representations of $Z(F)$ and $F$ \cite{Serre1977}. Thus, in the present case, any irreducible representation of $Z(F) \times F$ can be expressed as
\begin{align}
  f_{\mu\nu}: Z(F) \times F \ni (\hat{U},\hat{V})
  \longmapsto e^{i\theta_\mu(\hat{U})} \sigma_\nu(\hat{V}) \in \mathcal{H}_\nu.
\end{align}
Therefore, we have the irreducible decomposition of $(\mathcal{H},f)$ as
\begin{align}
  \mathcal{H} &\cong \bigoplus_{\mu,\nu} {\mathcal{H}_\nu}^{\oplus n_{\mu\nu}} \cong \bigoplus_{\mu,\nu} \underbrace{\mathbb{C}^{n_{\mu\nu}} \otimes \mathcal{H}_\nu}_{\equiv \mathcal{K}_{\mu\nu}}, \label{eq:irr_H}\\
  f &\cong \bigoplus_{\mu,\nu} {f_{\mu\nu}}^{\oplus n_{\mu\nu}} \cong \bigoplus_{\mu,\nu} e^{i\theta_\mu} \hat{1}_{\mathbb{C}^{n_{\mu\nu}}} \otimes f_\nu. \label{eq:irr_f}
\end{align}
Here, $\cong$ denotes the unitary equivalence, $\oplus$ denotes the direct sum, and $n_{\mu\nu}$ denotes the number of copies of the irreducible representation $(\mathcal{H}_\nu,f_{\mu\nu})$ in $(\mathcal{H},f)$. In particular, we obtain
\begin{align}
  \hat{R}
  = f(\hat{R},\hat{1})
\cong \bigoplus_{\mu,\nu} e^{i\theta_\mu(\hat{R})} \hat{1}_{\mathbb{C}^{n_{\mu\nu}}} \otimes \hat{1}_{\mathcal{H}_\nu}. \label{eq:irr_R}
\end{align}
On the other hand, because for any $(\hat{U},\hat{V}) \in Z(F) \times F$ it holds $[\hat{\Gamma}_N,f(\hat{U},\hat{V})]=0$ by the definition of $F$, it follows from Schur's lemma that there exists $\hat{\gamma}_{\mu\nu} \in M(\mathbb{C}^{n_{\mu\nu}})$
such that
\begin{align}
  \hat{\Gamma}_N \cong \bigoplus_{\mu\nu} \hat{\gamma}_{\mu\nu} \otimes \hat{1}_{\mathcal{H}_{\nu}}. \label{eq:irr_Gamma}
\end{align}

Next, we diagonalize $\hat{\Gamma}_N$. Since $\hat{\Gamma}_N$ is self-adjoint, $\hat{\gamma}_{\mu\nu}$ is also self-adjoint. Then let $\{\ket{i}_{\mu\nu}\}_{i=1,\cdots,n_{\mu\nu}}$ be an orthonormal basis of $\mathbb{C}^{n_{\mu\nu}}$ with respect to which $\hat{\gamma}_{\mu\nu}$ is diagonal and $\gamma_{\mu\nu}^i$ be the (real) eigenvalue associated with $\ket{i}_{\mu\nu}$.
Furthermore, we take an arbitrary orthonormal basis of $\mathcal{H}_\nu$ as $\{\ket{a}_{\nu}\}_{a=1,\cdots,\dim\mathcal{H}_\nu}$.
Using $\ket{i}_{\mu\nu}$ and $\ket{a}_{\nu}$,
we set $\ket{E_{\mu\nu}^{ia}} (\in \mathcal{K}_{\mu\nu} \subset \mathcal{H})$ as
\begin{align}
  \ket{E_{\mu\nu}^{ia}} \equiv \ket{i}_{\mu\nu} \otimes \ket{a}_{\nu}. \label{eq:Emunui_def}
\end{align}
Then $\{\ket{E_{\mu\nu}^{ia}}\}_{\mu,\nu,i,a}$ is a complete orthonomal basis of $\mathcal{H}$.
In addition, using Eqs.~\eqref{eq:irr_R} and \eqref{eq:irr_Gamma}, we obtain
\begin{align}
  \hat{R} \ket{E_{\mu\nu}^{ia}} &= e^{i\theta_\mu(\hat{R})} \ket{E_{\mu\nu}^{ia}},\\
  \hat{\Gamma}_N \ket{E_{\mu\nu}^{ia}} &= \gamma_{\mu\nu}^i \ket{E_{\mu\nu}^{ia}}.
\end{align}

Then we examine the properties of $\gamma_{\mu\nu}^i$.
Suppose that $\gamma_{\mu\nu}^{i} = \gamma_{\mu'\nu'}^{i'}$.
Then, condition~\ref{cond:accidental_degeneracy} implies that
there exists $\hat{V} \in F$ such that
\begin{align}
  \braket{E_{\mu\nu}^{ia}|\hat{V}|E_{\mu'\nu'}^{i'a'}} \neq 0. \label{eq:Emunui_accidental_degeneracy}
\end{align}
On the other hand, using Eq.~\eqref{eq:irr_f} and \eqref{eq:Emunui_def}, we have
\begin{align}
  \ket{E_{\mu\nu}^{ia}} &\in \mathcal{K}_{\mu\nu}\\
  \hat{V} \ket{E_{\mu'\nu'}^{i'a'}} &\in \mathcal{K}_{\mu'\nu'}.
\end{align}
Since $\mathcal{K}_{\mu\nu} \cap \mathcal{K}_{\mu'\nu'} = \{0\}$ form Eq.~\eqref{eq:irr_H},
Eq.~\eqref{eq:Emunui_accidental_degeneracy} holds only when $\mu=\mu',\nu=\nu'$.
Furthermore, when $\mu=\mu',\nu=\nu'$,
since we have
\begin{align}
  \braket{E_{\mu\nu}^{ia}|\hat{V}|E_{\mu'\nu'}^{i'a'}}
  = \braket{i|i'}_{\mu\nu} \braket{a|\sigma_\nu(\hat{V})|a'}_{\nu},
\end{align}
Eq.~\eqref{eq:Emunui_accidental_degeneracy} holds only when $i=i'$. Thus, we get
\footnote{
Conversely, Eq.~\eqref{eq:accidental_degeneracy} implies condition~\ref{cond:accidental_degeneracy}. Suppose Eq.~\eqref{eq:accidental_degeneracy} holds. Then for any eigenstates $\ket{E_1}$ and $\ket{E_2}$ which belong to the same eigenvalue of $\hat{\Gamma}_N$, we can expand as
\begin{align}
  \ket{E_s}
  &= \sum_a c_s^a \ket{E_{\mu\nu}^{ia}}
  = \ket{i}_{\mu\nu} \otimes \left(\sum_a c_s^a \ket{a}_\nu\right)
  = \ket{i}_{\mu\nu} \otimes \ket{\phi_s}_\nu.
\end{align}
On the other hand, since $(\mathcal{H}_\nu,\sigma_\nu)$ is an irreducible representation of $F$, there exists $\hat{V} \in F$ such that $\braket{\phi_1|\sigma_\nu(\hat{V})|\phi_2} \neq 0$. Therefore, we have condition~\ref{cond:accidental_degeneracy}.
}
\begin{align}
  \gamma_{\mu\nu}^{i} = \gamma_{\mu'\nu'}^{i'}
  \Longleftrightarrow \mu=\mu',\nu=\nu',i=i'.
  \label{eq:accidental_degeneracy}
\end{align}

Finaly, we evaluate the long-time average of the expectation value of $\hat{X}_{\indind{N}{1}}(\tau)$. Due to the completeness of $\{\ket{E_{\mu\nu}^{ia}}\}_{\mu,\nu,i,a}$, we have
\begin{align}
  \hat{X}_{\indind{N}{1}}(\tau) = \sum_{\substack{\mu,\nu,i,a\\ \mu',\nu',i',a'}} e^{+i(\gamma_{\mu\nu}^i-\gamma_{\mu'\nu'}^{i'})\tau} \ket{E_{\mu\nu}^{ia}} \braket{E_{\mu\nu}^{ia}| \hat{X}_{\indind{N}{1}} |E_{\mu'\nu'}^{i'a'}} \bra{E_{\mu'\nu'}^{i'a'}}.
\end{align}
Therefore, form Eq.~\eqref{eq:accidental_degeneracy}, we have
\begin{align}
  \lim_{T\to\infty} \frac{1}{T} \int_0^T d\tau \hat{X}_{\indind{N}{1}}(\tau) = \sum_{\substack{\mu,\nu,i,a,a'}} \ket{E_{\mu\nu}^{ia}} \braket{E_{\mu\nu}^{ia}| \hat{X}_{\indind{N}{1}} |E_{\mu\nu}^{ia'}} \bra{E_{\mu\nu}^{ia'}}.
\end{align}
On the other hand, by condition~\ref{cond:order_parameter}, we have
\begin{align}
  \braket{E_{\mu\nu}^{ia}| \hat{X}_{\indind{N}{1}} |E_{\mu\nu}^{ia'}}
  &= - \braket{E_{\mu\nu}^{ia}| \hat{R}^\dagger \hat{X}_{\indind{N}{1}} \hat{R} |E_{\mu\nu}^{ia'}} \nonumber\\
  &= - \braket{E_{\mu\nu}^{ia}| e^{-i\theta_\mu(\hat{R})} \hat{X}_{\indind{N}{1}} e^{+i\theta_\mu(\hat{R})} |E_{\mu\nu}^{ia'}}
  = - \braket{E_{\mu\nu}^{ia}| \hat{X}_{\indind{N}{1}} |E_{\mu\nu}^{ia'}}.
\end{align}
Therefore, we obtain
\begin{align}
  \lim_{T\to\infty} \frac{1}{T} \int_0^T d\tau \hat{X}_{\indind{N}{1}}(\tau) = 0.
\end{align}
\end{proof}

\section{Proof of Theorem~\ref{theorem:stationarity}} \label{sec:stationarity_proof}
In our arguments, the following lemma plays a key role.
\begin{lemma}[Lieb-Robinson Bound \cite{Lieb1972,Nachtergaele2006,Nachtergaele2010}] \label{lemma:Lieb-Robinson}
There exist positive constants $C$, $\mu$ and $v$,
such that for any observables $\hat{A}$ and $\hat{B}$
with finite supports $U \subset \Lambda$ and $V \subset \Lambda$, respectively,
and for any $\tau \in \mathbb{R}$,
\begin{align}
  \left\|\left[\hat{A}(\tau), \hat{B}\right]\right\|
  &\leq C \left\|\hat{A}\right\| \left\|\hat{B}\right\|
  \min \left[\left|U\right|,\left|V\right|\right]
  e^{-\mu\left(d(U,V)-v|\tau|\right)}.
\end{align}
where
$d(x,y)$ is the distance
which is defined to be the shortest path length
that one needs to connect $x$ to $y$.
\end{lemma}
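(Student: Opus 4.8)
The plan is to prove the commutator bound directly from the Heisenberg equation of motion for $\hat{A}(\tau)$, turning it into an integral inequality and iterating it (the Dyson-series route of Nachtergaele and Sims), with the exponential decay emerging from the short range of the generator $\hat{\Gamma}_N$. First I would record that $\hat{\Gamma}_N(\bm{\Pi}) = \sum_i \Pi_{\ind{i}} \hat{X}_{\indind{N}{i}}$ is itself a sum of uniformly bounded local terms. Since each $\hat{X}_{\indind{N}{i}}$ is a lattice sum of translates of the $N$-independent local observable $\hat{o}_{\ind{i}}$ supported on the finite set $I_{\ind{i}}$, I can write $\hat{\Gamma}_N = \sum_{Z} \hat{h}_Z$, where each $\hat{h}_Z$ is supported on a region $Z$ of diameter at most $R := \max_i \mathrm{diam}(I_{\ind{i}})$ and the per-site sum $\sum_{Z \ni x} \|\hat{h}_Z\|$ is bounded by a constant $J$ uniform in the site $x$ and in $N$. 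This finite range $R$ and uniform weight $J$ are the only features of the generator that the proof uses.

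Next, for fixed $\hat{A}$ supported on $U$ and $\hat{B}$ supported on $V$, I set $f(\tau) := [\hat{A}(\tau), \hat{B}]$ and differentiate. Using $\tfrac{d}{d\tau}\hat{A}(\tau) = i[\hat{\Gamma}_N, \hat{A}(\tau)]$ together with the Jacobi identity gives $\tfrac{d}{d\tau} f(\tau) = i[\hat{\Gamma}_N, f(\tau)] - i[\hat{A}(\tau), [\hat{\Gamma}_N, \hat{B}]]$. The homogeneous term generates conjugation by the unitary $e^{i\hat{\Gamma}_N \tau}$, which preserves the operator norm, so Duhamel's formula and the triangle inequality yield the fundamental inequality $\|f(\tau)\| \le \|[\hat{A}, \hat{B}]\| + \int_0^{|\tau|} \|[\hat{A}(s), [\hat{\Gamma}_N, \hat{B}]]\|\, ds$. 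Expanding $[\hat{\Gamma}_N, \hat{B}] = \sum_{Z : Z \cap V \neq \emptyset} [\hat{h}_Z, \hat{B}]$, in which only terms meeting $V$ survive, replaces $\hat{B}$ by finitely many operators $[\hat{h}_Z, \hat{B}]$ supported on the enlarged regions $Z \cup V$, with total weight bounded by $2J|V|\,\|\hat{B}\|$. Iterating this step produces a Dyson series: after $n$ iterations one obtains a sum over chains $Z_1, \dots, Z_n$ of local terms, each overlapping the region grown so far, with every chain contributing a factor of order $(2J|\tau|)^n/n!$ multiplied by an initial commutator $\|[\hat{A}, \cdot\,]\|$ that vanishes unless the grown support $V \cup Z_1 \cup \cdots \cup Z_n$ reaches $U$.

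Because each $Z_k$ has diameter at most $R$ and must overlap the current region, the support can reach $U$ only after the chain has traversed the distance $d(U,V)$, i.e.\ only for $n \ge d(U,V)/R$. Hence only such terms contribute, and summing this tail of the exponential series $\sum_{n \ge d(U,V)/R} (a|\tau|)^n/n!$, where $a$ is a constant fixed by $J$ and the lattice geometry, gives a bound of the claimed form $C\,\|\hat{A}\|\,\|\hat{B}\|\, e^{-\mu(d(U,V) - v|\tau|)}$ with $v \sim aR$ and $\mu \sim R^{-1}$. The prefactor counting chains emanating from $V$ supplies the $|V|$ factor; running the identical argument with $\hat{A}$ and $\hat{B}$ interchanged, by evolving $\hat{B}$ and using $\|[\hat{A}(\tau), \hat{B}]\| = \|[\hat{A}, \hat{B}(-\tau)]\|$, gives the same estimate with $|U|$, and keeping the smaller of the two yields the stated $\min[|U|, |V|]$.

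The hard part will be the combinatorial control of the chain sum: one must show that the number of chains starting at $V$ whose union first reaches $U$ is small enough that the series converges and reassembles into a clean exponential in $d(U,V) - v|\tau|$, uniformly in $N$. This is exactly where short range is essential, and the standard device is a reproducing inequality for an exponential decay function $F(r) = e^{-\mu r}$ on the hypercubic lattice, $\sum_{w} F(d(x,w)) F(d(w,y)) \le C_F\, F(d(x,y))$, whose constant $C_F$ is finite because the lattice has polynomially bounded volume growth while $F$ decays exponentially. This convolution constant controls both $\mu$ and $v$, and once it is established the resummation is routine.
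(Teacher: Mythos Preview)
Your sketch is a correct outline of the standard Nachtergaele--Sims iteration, and nothing in it is wrong in spirit (the sign in front of the inhomogeneous Jacobi term should be $+i$ rather than $-i$, but this is immaterial for the norm estimate). However, the paper does not actually prove this lemma at all: its entire proof reads ``This follows from Eq.~(2.15) in Ref.~\cite{Nachtergaele2010}.'' In other words, the paper simply imports the Lieb--Robinson bound as a black box from the cited literature, whereas you have reproduced in outline the very argument of that reference. So there is no disagreement in method, only in how much is shown on the page: the paper defers entirely to \cite{Nachtergaele2010}, and what you wrote is essentially the skeleton of the proof one finds there.
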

\begin{proof}
This follows from Eq.~(2.15) in Ref.~\cite{Nachtergaele2010}.
\end{proof}

\begin{proof}[Proof of Theorem~\ref{theorem:stationarity}]
First, we construct a local observable which approximates $\hat{A}(\tau)$. Take a real number $l$ such that $1 \ll l \ll L$. Here $L$ is a linear dimension of the system. Let $\Lambda^\mathrm{in} \equiv \left\{ j \in \Lambda \mid d(U,j)<l \right\}$, then $\left|\partial \Lambda^\mathrm{in}\right|=O\left(l^{\nu-1}\right)$. Since $\hat{\Gamma}_N$ is additive, there exist observables $\hat{\Gamma}_N^\mathrm{in}$, $\hat{\Gamma}_N^\mathrm{out}$ and $\hat{\Gamma}_N^\mathrm{int}$ such that
\begin{itemize}
  \item $\hat{\Gamma}_N = \hat{\Gamma}_N^\mathrm{in}+\hat{\Gamma}_N^\mathrm{out}+\hat{\Gamma}_N^\mathrm{int}$,
  \item $\mathrm{supp\ } \hat{\Gamma}_N^\mathrm{in} = \Lambda^\mathrm{in}$,
  \item $\mathrm{supp\ } \hat{\Gamma}_N^\mathrm{out} = \Lambda \setminus \Lambda^\mathrm{in}$,
  \item $\left\| \hat{\Gamma}_N^\mathrm{int} \right\|=O\left(l^{\nu-1}\right)$.
\end{itemize}
Now, we define an observable on $\Lambda^\mathrm{in}$ as
\begin{align}
  \tilde{A}(\tau)
  &\equiv e^{+ i (\hat{\Gamma}_N-\hat{\Gamma}_N^\mathrm{int}) \tau} \hat{A} e^{- i (\hat{\Gamma}_N-\hat{\Gamma}_N^\mathrm{int}) \tau}
  = e^{+ i \hat{\Gamma}_N^\mathrm{in} \tau} \hat{A} e^{- i \hat{\Gamma}_N^\mathrm{in} \tau}.
\end{align}
Applying Lemma~\ref{lemma:Lieb-Robinson}, we have
\begin{align}
  \left\|\hat{A}(\tau) - \tilde{A}(\tau)\right\| &= \left\|
    e^{+i \hat{\Gamma}_N \tau} \hat{A} e^{-i \hat{\Gamma}_N \tau}
    - e^{+i \left(\hat{\Gamma}_N-\hat{\Gamma}_N^\mathrm{int}\right) \tau} \hat{A} e^{-i \left(\hat{\Gamma}_N-\hat{\Gamma}_N^\mathrm{int}\right) \tau}
  \right\| \nonumber\\
  &= \left\|\int_{0}^{\tau} \frac{d}{d\sigma} \left(
    e^{+i \left(\hat{\Gamma}_N-\hat{\Gamma}_N^\mathrm{int}\right) (\tau-\sigma)}
    e^{+i \hat{\Gamma}_N \sigma}
    \hat{A}
    e^{-i \hat{\Gamma}_N \sigma}
    e^{-i \left(\hat{\Gamma}_N-\hat{\Gamma}_N^\mathrm{int}\right) (\tau-\sigma)}
  \right) d\sigma \right\| \nonumber\\
  &= \left\|\int_{0}^{\tau}
    e^{+i \left(\hat{\Gamma}_N-\hat{\Gamma}_N^\mathrm{int}\right) \sigma}
    \left[
      \hat{\Gamma}_N^\mathrm{int},
      e^{+i \hat{\Gamma}_N \sigma}
        \hat{A}
      e^{-i \hat{\Gamma}_N \sigma}
    \right]
    e^{-i \left(\hat{\Gamma}_N-\hat{\Gamma}_N^\mathrm{int}\right) \sigma}
  d\sigma \right\| \nonumber\\
  &\leq \int_{0}^{|\tau|} \left\|\left[ \hat{\Gamma}_N^\mathrm{int}, \hat{A}(\sigma) \right]\right\| d\sigma \nonumber\\
  &\leq \frac{C}{\mu v} \left\|\hat{\Gamma}_N^\mathrm{int}\right\| \left\|\hat{A}\right\| \left|U\right| e^{-\mu\left(l-v|\tau|\right)} \nonumber\\
  &= O\left( l^{\nu-1} e^{-\mu\left(l-v|\tau|\right)} \right).
  \label{eq:stationarity_hat-tilde}
\end{align}
Thus, we see that $\tilde{A}(\tau)$ is an observable on $\Lambda^\mathrm{in}$ which approximates $\hat{A}(\tau)$ with an error of $O\left( l^{\nu-1} e^{-\mu\left(l-v|\tau|\right)} \right)$.

Now we evaluate the time derivative of the expectation value of $\hat{A}(\tau)$ in the SE:
\begin{align}
  - i \frac{d}{d\tau} \Tr \left[ \hat{A}(\tau) \hat{\rho}_N^\eta \right]
  &= \Tr \left[ \left[ \hat{\Gamma}_N, \hat{A}(\tau) \right] \hat{\rho}_N^\eta \right].
\end{align}
Decomposing $\hat{A}(\tau)$ into its local term $\tilde{A}(\tau)$ and nonlocal term $\hat{A}(\tau)-\tilde{A}(\tau)$, we have
\begin{align}
  - i \frac{d}{d\tau} \Tr \left[ \hat{A}(\tau) \hat{\rho}_N^\eta \right]
  &= \underbrace{\Tr \left[ \left[ \hat{\Gamma}_N, \tilde{A}(\tau) \right] \hat{\rho}_N^\eta \right]}_{\equiv (\ast 1)}
  + \underbrace{\Tr \left[ \left[ \hat{\Gamma}_N, \hat{A}(\tau)-\tilde{A}(\tau) \right] \hat{\rho}_N^\eta \right]}_{\equiv (\ast 2)}. \label{eq:dAdt_partitioning}
\end{align}

First, we evaluate $(\ast 1)$ in Eq.~\eqref{eq:dAdt_partitioning}. Since $\eta$ is a polynomial, expanding in power series around $\bm{x}^\eta$, $\eta(\hat{\bm{x}}_N)$ can be written as a finite sum as
\begin{align}
  \eta(\hat{\bm{x}}_N)
  = \hat{\Gamma}_N(\bm{\Pi}^\eta)/N
  - \sum_{K(\geq 2),\{i_k\}} c_{i_1 \cdots i_K} \prod_{k=1}^{K} \left(\hat{x}_{\indind{N}{i_k}}-x_{\ind{i_k}}^\eta\right)
  + \mathrm{const.},
\end{align}
where we have used Eqs.~\eqref{eq:Gamma_N} and \eqref{eq:tdforce-formula}.
Substituting this into $(\ast 1)$, we have
\begin{align}
  (\ast 1)
  &= \sum_{K,\{i_k\}} c_{i_1 \cdots i_K} N \Tr \left[ \left[ \prod_{k=1}^{K} \left(\hat{x}_{\indind{N}{i_k}}-x_{\ind{i_k}}^\eta\right), \tilde{A}(\tau) \right] \hat{\rho}_N^\eta \right]\\
  &= \sum_{K,\{i_k\}} c_{i_1 \cdots i_K} \underbrace{\sum_{k'=1}^K \Tr \left[
    \prod_{k=1}^{k'-1} \left(\hat{x}_{\indind{N}{i_k}}-x_{\ind{i_k}}^\eta\right)
    \hat{q}_{\indind{N}{i_{k'}}} 
    \prod_{k=k'+1}^K \left(\hat{x}_{\indind{N}{i_k}}-x_{\ind{i_k}}^\eta\right)
  \hat{\rho}_N^\eta \right]}_{\equiv (\ast\ast)}.
\end{align}
Here we set
\begin{align}
  \hat{q}_{\indind{N}{i}}
  &\equiv \left[ \hat{X}_{\indind{N}{i}}, \tilde{A}(\tau) \right]
  = \sum_{\substack{j\in\mathbb{Z}^\nu \text{\ s.t.}\\\Lambda^\mathrm{in} \ \cap \ \mathrm{supp\ } \gamma_j(\hat{o}_{\ind{i}}) \neq \emptyset}} \left[ \gamma_j(\hat{o}_{\ind{i}}), \tilde{A}(\tau) \right],
\end{align}
where in the last equality we have used the fact that $\tilde{A}(\tau)$ is supported on $\Lambda^\mathrm{in}$. Then, by the triangle inequality and the unitary invariance of the operator norm, we have
\begin{align}
  \left\| \hat{q}_{\indind{N}{i}} \right\|
  \leq \sum_{\substack{j\in\mathbb{Z}^\nu \text{\ s.t.}\\\Lambda^\mathrm{in} \ \cap \ \mathrm{supp\ } \gamma_j(\hat{o}_{\ind{i}}) \neq \emptyset}} 2 \left\| \hat{o}_{\ind{i}} \right\| \left\| \hat{A} \right\|
  = O(l^\nu). \label{eq:q_norm}
\end{align}
Since $K \geq 2$, applying the Cauchy-Schwartz inequality
\begin{align}
  \left| \Tr \left[ \hat{A} \hat{B}^\dagger\hat{\rho}_N^\eta \right] \right|^2
  \leq \Tr \left[ \hat{A} \hat{A}^\dagger\hat{\rho}_N^\eta \right] \Tr \left[ \hat{B} \hat{B}^\dagger\hat{\rho}_N^\eta \right],
\end{align}
we have
\begin{align}
  |(\ast\ast)|
  &\leq \left| \Tr \left[
    \hat{q}_{\indind{N}{i_1}}
    \prod_{k=2}^K \left(\hat{x}_{\indind{N}{i_k}}-x_{\ind{i_k}}^\eta\right)
  \hat{\rho}_N^\eta \right] \right| \nonumber\\
  &+ \left| \Tr \left[
    \left(\hat{x}_{\indind{N}{i_1}}-x_{\ind{i_1}}^\eta\right)
    \hat{q}_{\indind{N}{i_2}}
    \prod_{k=3}^K \left(\hat{x}_{\indind{N}{i_k}}-x_{\ind{i_k}}^\eta\right)
  \hat{\rho}_N^\eta \right] \right| \nonumber\\
  &+ \cdots \nonumber\\
  &+ \left| \Tr \left[
    \prod_{k=1}^{K-1} \left(\hat{x}_{\indind{N}{i_k}}-x_{\ind{i_k}}^\eta\right)
    \hat{q}_{\indind{N}{i_K}}
  \hat{\rho}_N^\eta \right] \right| \nonumber\\
  &\leq \sqrt{\Tr \left[
    \hat{q}_{\indind{N}{i_1}}
    \prod_{k=2}^{K-1} \left(\hat{x}_{\indind{N}{i_k}}-x_{\ind{i_k}}^\eta\right)
    \prod_{k=K-1}^{2} \left(\hat{x}_{\indind{N}{i_k}}-x_{\ind{i_k}}^\eta\right)
    \hat{q}_{\indind{N}{i_1}}^\dagger
  \hat{\rho}_N^\eta \right]
  \Tr \left[
    \left(\hat{x}_{\indind{N}{i_K}}-x_{\ind{i_K}}^\eta\right)^2
  \hat{\rho}_N^\eta \right]} \nonumber\\
  &+ \sqrt{\Tr \left[
    \left(\hat{x}_{\indind{N}{i_1}}-x_{\ind{i_1}}^\eta\right)^2
  \hat{\rho}_N^\eta \right]
  \Tr \left[
    \prod_{k=K}^{3} \left(\hat{x}_{\indind{N}{i_k}}-x_{\ind{i_k}}^\eta\right)
    \hat{q}_{\indind{N}{i_2}}^\dagger
    \hat{q}_{\indind{N}{i_2}}
    \prod_{k=3}^{K} \left(\hat{x}_{\indind{N}{i_k}}-x_{\ind{i_k}}^\eta\right)
  \hat{\rho}_N^\eta \right]} \nonumber\\
  &+ \cdots \nonumber\\
  &+ \sqrt{\Tr \left[
    \left(\hat{x}_{\indind{N}{i_1}}-x_{\ind{i_1}}^\eta\right)^2
  \hat{\rho}_N^\eta \right]
  \Tr \left[
    \hat{q}_{\indind{N}{i_K}}^\dagger
    \prod_{k=K-1}^{2} \left(\hat{x}_{\indind{N}{i_k}}-x_{\ind{i_k}}^\eta\right)
    \prod_{k=2}^{K-1} \left(\hat{x}_{\indind{N}{i_k}}-x_{\ind{i_k}}^\eta\right)
    \hat{q}_{\indind{N}{i_K}}
  \hat{\rho}_N^\eta \right]}.
  \label{eq:astast}
\end{align}
Furthermore,
\begin{align}
  \Tr \left[
    \left(\hat{x}_{\indind{N}{i}}-x_{\ind{i}}^\eta\right)^2
  \hat{\rho}_N^\eta \right]
  = \Tr \left[
    \left(\hat{x}_{\indind{N}{i}}-x_{\indind{N}{i}}^\eta\right)^2
  \hat{\rho}_N^\eta \right]
  + \left(x_{\indind{N}{i}}^\eta-x_{\ind{i}}^\eta\right)^2.
\end{align}
Therefore, under the conditions of the theorem (Eqs.~\eqref{cond:stationarity_finite-size-effect}-\eqref{cond:stationarity_variance}), we obtain
\begin{align}
  |(\ast1)| &= O\left(N^{-\alpha} l^\nu\right).
\end{align}

Next, we evaluate $(\ast 2)$ in Eq.~\eqref{eq:dAdt_partitioning}. Using H\"{o}lder's inequality and Eq.~\eqref{eq:stationarity_hat-tilde}, we obtain
\begin{align}
  |(\ast2)|
  &\leq \left\| \left[ \hat{\Gamma}_N, \hat{A}(\tau)-\tilde{A}(\tau)\right] \hat{\rho}_N^\eta \right\|_1 \nonumber\\
  &\leq 2 \left\| \hat{\rho}_N^\eta \right\|_1 \left\| \hat{\Gamma}_N \right\| \left\| \hat{A}(\tau)-\tilde{A}(\tau) \right\|
  = O\left(N l^{\nu-1} e^{-\mu\left(l-v|\tau|\right)}\right).
\end{align}

From the above, the time derivative of the expectation value in the SE can be evaluated as
\begin{align}
  \left| \frac{d}{d\tau} \Tr \left[ \hat{A}(\tau) \hat{\rho}_N^\eta \right] \right|
  &= O\left(N^{-\alpha} l^\nu\right) + O\left(N l^{\nu-1} e^{-\mu\left(l-v|\tau|\right)}\right).
\end{align}
Using the triangle inequality for integrals, we have
\begin{align}
  \left| \Tr \left[ \hat{A}(\tau) \hat{\rho}_N^\eta \right]
  - \Tr \left[ \hat{A}(0) \hat{\rho}_N^\eta \right] \right|
  &= O\left(|\tau| N^{-\alpha} l^\nu\right) + O\left(|\tau| N l^{\nu-1} e^{-\mu\left(l-v|\tau|\right)}\right).
\end{align}
Then, letting $l = v|\tau| + \frac{2}{\mu} \log N$,
we obtain
\begin{align}
  \left| \Tr \left[ \hat{A}(\tau) \hat{\rho}_N^\eta \right]
  - \Tr \left[ \hat{A}(0) \hat{\rho}_N^\eta \right] \right|
  = \begin{cases}
    \displaystyle O \left( |\tau| N^{-\alpha} (\log N)^\nu\right) & (v|\tau| \lesssim \log L)\\ \\
    \displaystyle O \left( |\tau|^{\nu+1} N^{-\alpha} \right) & (\log L \ll v|\tau| \ll L)
  \end{cases}.
\end{align}
Therefore,
we find that Eq.~\eqref{eq:stationarity} holds for $|\tau|<T=\Theta(N^{\min [\alpha/(\nu+1),1/\nu]})$.
\end{proof}
 \twocolumngrid

\bibliography{document}
\end{document}